\documentclass[conference]{IEEEtran}
\IEEEoverridecommandlockouts
\usepackage{amsmath,amssymb,amsfonts,amsthm}
\usepackage{graphicx}
\usepackage{xcolor}
\usepackage{cite}
\usepackage{booktabs}
\usepackage[hidelinks]{hyperref}

\newtheorem{theorem}{Theorem}
\newtheorem{proposition}{Proposition}
\newtheorem{corollary}{Corollary}
\newcommand{\bv}[1]{\mathbf{#1}}
\newcommand{\bdelta}{\boldsymbol{\delta}}
\newcommand{\Pp}{\bv{P}^{\perp}}

\begin{document}

\title{RSS-Based Verification of Drone Remote ID Position Claims with Unknown Transmit Power}

\author{\IEEEauthorblockN{Cole Dickerson\IEEEauthorrefmark{1},
Ismail Guvenc\IEEEauthorrefmark{1}\IEEEauthorrefmark{2},
Wahab Khawaja\IEEEauthorrefmark{1}, and Ali Cafer Gurbuz\IEEEauthorrefmark{1}}
\IEEEauthorblockA{\IEEEauthorrefmark{1}Department of Electrical and Computer Engineering\\
North Carolina State University, Raleigh, NC 27695, USA}
\IEEEauthorblockA{\IEEEauthorrefmark{2}WolfSky RF Sensing Systems, Raleigh, NC 27695, USA\\
Email: \{jcdicker, iguvenc, wkhawaj, aligurbuz\}@ncsu.edu}
\thanks{This work was supported in part by the NSF under award CNS-2332835 and by the NSF Graduate Research Fellowship Program under DGE-2137100.}}

\maketitle

\begin{abstract}
Broadcast Remote Identification (Remote ID) is a key enabler for integrating unmanned aircraft into the airspace. However, the position it reports is self-declared and can be falsified with low-cost hardware. In this paper, we study how Remote ID receivers can verify the declared position from the received signal strength (RSS) when neither the transmit power nor the path-loss exponent is known. In particular, we derive a Rao score test whose null distribution is exactly chi-squared for any number of packets. Its threshold and detection probability therefore follow in closed form. We then bound the detection probability of any detector that is invariant to the unknown link parameters. This bound saturates as the falsified position recedes along a fixed bearing, so a sufficiently large falsification can remain undetectable. From this limit, we obtain the worst-case bearing in closed form, the observation length needed to close it, and a receiver placement rule. Simulation results show that the proposed test considerably outperforms an estimate-then-threshold detector at low false-alarm rates.
\end{abstract}

\begin{IEEEkeywords}
Remote ID, ASTM F3411, spoofing detection, received signal strength, Rao test, counter-UAS.
\end{IEEEkeywords}

\section{Introduction}

The integration of unmanned aircraft systems (UAS) into the national airspace relies on broadcast Remote Identification (Remote ID), which is now mandatory in the United States~\cite{faa_rid} and the European Union~\cite{eu_945}. Under ASTM F3411~\cite{astm_f3411}, an aircraft broadcasts its identity, position, velocity, and ground control station position at least once per second over Bluetooth or Wi-Fi. Any receiver in range can therefore locate both the aircraft and its operator.

However, the broadcast position is a self-reported global navigation satellite system (GNSS) fix, and nothing in the message set binds it to the physical location of the transmitter. As a result, open-source tools can already broadcast arbitrary Remote ID positions from commodity hardware~\cite{spoofer}. Authentication schemes such as~\cite{drip_arch,arid,brighente} address a different problem. They bind a message to a key, yet an aircraft holding a valid key can sign a false position as easily as a true one. Therefore, verifying a position claim requires an observable that depends on where the transmitter actually is. The received signal strength (RSS) of the broadcast is such an observable, and every Remote ID receiver already measures it.

Physical-layer verification of Remote ID has only recently been studied. In~\cite{ghostbuster}, a log-distance model is fitted to the RSS at multiple receivers, and the resulting ranges are multilaterated. A message is then flagged when the estimated and the claimed positions differ by more than a threshold. Trajectory consistency is checked in~\cite{orion}, RSS multilateration with an interacting multiple model filter localizes a spoofer in~\cite{webb}, and angle of arrival, Doppler shift, and channel gain are combined in~\cite{li_pla}. In our earlier work~\cite{patent}, we introduced a system that detects drones and evaluates their threat by fusing Remote ID with radio frequency (RF) sensing and radar. However, the detectors in~\cite{ghostbuster,webb,li_pla} threshold a distance or a fusion score, so their false-alarm rate must be calibrated by simulation for every deployment. Moreover, none of these works quantifies how large a falsification can remain undetected.

RSS has long been used to detect identity spoofing in wireless networks~\cite{chen_secon}, to verify aircraft positions in automatic dependent surveillance--broadcast (ADS-B)~\cite{strohmeier}, and to perform location verification with unknown transmit power~\cite{yan_commlett}, though prior RSS verifiers assume a known path-loss exponent and an asymptotic null. For stealthy attacks on state estimators, the largest undetectable manipulation is a standard security metric~\cite{urbina}, and the undetectable manipulations form a linear subspace. We show that, for an RSS verifier, they do not form a subspace and can be unbounded in magnitude.

In this paper, we formulate Remote ID position verification as a composite hypothesis test in which the link budget and the path-loss exponent are unknown nuisance parameters. In particular, we derive a Rao score test whose null distribution is exactly chi-squared for any number of packets. Its threshold and detection probability are therefore available in closed form. We then derive an upper bound on the detection probability of any test that is invariant to the nuisance parameters. Moreover, we prove that this bound converges to a finite limit as the falsification grows along a fixed bearing. From this limit, we obtain the worst bearing, the number of packets needed to bound the set of undetectable falsifications, and a receiver placement rule. This limit also quantifies the value of a transmit power field in the Remote ID message. To the best of our knowledge, an exactly calibrated detector and a characterization of the undetectable falsifications have not been reported for Remote ID.

\section{System Model}\label{sec:model}

We consider $K$ Remote ID receivers at known positions $\bv{q}_k\in\mathbb{R}^3$, $k=1,\dots,K$, around a protected asset, as illustrated in Fig.~\ref{fig:geometry}. An unmanned aircraft at true positions $\bv{p}_n$, $n=1,\dots,N$, broadcasts one ASTM F3411 Location/Vector message per epoch, so that at the mandated rate of 1~Hz, $N$ packets span $N$ seconds. All positions are expressed in a local east-north-up frame centered on the asset.

\begin{figure}[t]
\centering
\includegraphics[width=\columnwidth]{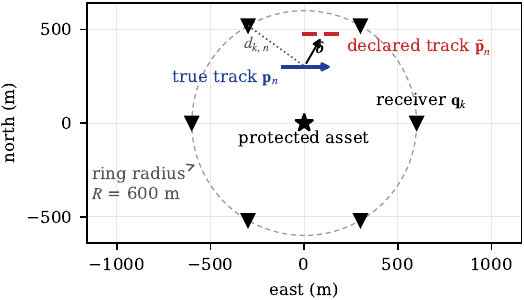}
\caption{Verification geometry with $K=6$ receivers on a ring of radius 600~m. The aircraft flies the solid track and declares the dashed track, displaced by $\bdelta$, while the received power depends on the true ranges $d_{k,n}$.}
\label{fig:geometry}
\end{figure}

\subsection{RSS Model}

The received power in dBm of packet $n$ at receiver $k$ is
\begin{equation}\label{eq:rss}
r_{k,n} = \beta - 10\gamma\log_{10} d_{k,n} + w_{k,n},\qquad d_{k,n}=\|\bv{p}_n-\bv{q}_k\|,
\end{equation}
where $\gamma$ is the path-loss exponent and $w_{k,n}\sim\mathcal{N}(0,\sigma^2)$ is the shadowing in dB, assumed independent across $(k,n)$. The intercept $\beta$ lumps together the equivalent isotropically radiated power (EIRP), the receive antenna gain and the path loss at a 1~m reference distance. This single-slope model with log-normal shadowing is widely used for air-to-ground channels~\cite{khawaja_survey}. A packet is decoded when its mean received power exceeds the receiver sensitivity. We denote the set of decoded pairs $(k,n)$ by $\mathcal{M}$, let $M=|\mathcal{M}|$, and stack the measurements into $\bv{r}\in\mathbb{R}^M$.

The main difficulty is that the ASTM F3411 message set has no transmit power field~\cite{astm_f3411}. The verifier therefore knows neither the EIRP nor the propagation environment, and must treat $\boldsymbol{\theta}=[\beta,\gamma]^{\top}$ as unknown. We refer to this as the \emph{unanchored} case. For comparison, we also consider the \emph{anchored} case, in which $\beta$ is known through, for example, a certified power class, and the fully known case.

\subsection{Falsification Model}

A falsifying aircraft declares the positions
\begin{equation}\label{eq:spoof}
\tilde{\bv{p}}_n = \bv{p}_n + \bdelta,
\end{equation}
where $\bdelta\in\mathbb{R}^2$ is a horizontal offset held constant over the observation window. We restrict $\bdelta$ to the horizontal plane because the message also carries a barometric altitude, and because RSS at ground receivers carries little information about altitude. Further, we assume that the attacker controls the message content but not the emission itself, that is, the transmit power is fixed and the antenna is nominally omnidirectional. In addition, a stealth-aware attacker knows the receiver positions and chooses $\bdelta$ to avoid detection.

Given $\tilde{\bv{p}}_n$ and $\bv{r}$, the verifier tests
\begin{equation}\label{eq:hyp}
\mathcal{H}_0:\ \bdelta=\bv{0}\qquad\text{versus}\qquad \mathcal{H}_1:\ \bdelta\neq\bv{0}.
\end{equation}
To write the likelihood in terms of quantities that the verifier can compute, note from~\eqref{eq:spoof} that the true positions are $\bv{p}_n=\tilde{\bv{p}}_n-\bdelta$. For a hypothesized offset $\bdelta$, we therefore define the vector of log-ranges
\begin{equation}\label{eq:u}
\bv{u}(\bdelta)=\big[\log_{10}\|\tilde{\bv{p}}_n-\bdelta-\bv{q}_k\|\big]_{(k,n)\in\mathcal{M}}\in\mathbb{R}^{M},
\end{equation}
where $\bdelta$ is zero-padded to three dimensions. At the true offset, $\bv{u}(\bdelta)$ equals the true log-ranges. Note that $\bv{u}(\bv{0})$ uses the declared positions directly, without any offset correction. Stacking~\eqref{eq:rss} over $\mathcal{M}$, the mean of $\bv{r}$ is $\beta\bv{1}-10\gamma\,\bv{u}(\bdelta)$, which is linear in $\boldsymbol{\theta}$:
\begin{equation}\label{eq:lin}
\bv{r}\sim\mathcal{N}\big(\bv{H}(\bdelta)\boldsymbol{\theta},\,\sigma^2\bv{I}\big),\qquad \bv{H}(\bdelta)=\big[\bv{1},\,-10\,\bv{u}(\bdelta)\big]\in\mathbb{R}^{M\times 2}.
\end{equation}
Since $\boldsymbol{\theta}$ is unknown under both hypotheses, \eqref{eq:hyp} is a composite test on both sides.

\section{Exactly Calibrated Rao Test}\label{sec:rao}

In this section, we eliminate the link parameters by least squares and explain why existing detectors need calibration by simulation. We then derive a detector whose false-alarm rate depends neither on the geometry, nor on the number of packets, nor on $\boldsymbol{\theta}$.

\subsection{Profiling the Link Parameters}

From~\eqref{eq:lin}, the log-likelihood of $\bv{r}$ for a given $(\bdelta,\boldsymbol{\theta})$ is
\begin{equation}\label{eq:ll}
\ln L(\bdelta,\boldsymbol{\theta})=-\frac{M}{2}\ln(2\pi\sigma^2)-\frac{1}{2\sigma^2}\big\|\bv{r}-\bv{H}(\bdelta)\boldsymbol{\theta}\big\|^2 .
\end{equation}
For a fixed $\bdelta$, maximizing~\eqref{eq:ll} over $\boldsymbol{\theta}$ is a linear least-squares (LS) problem: we regress the measured powers $\bv{r}$ on an intercept and on the hypothesized log-ranges $\bv{u}(\bdelta)$. Its solution is
\begin{equation}\label{eq:ls}
\hat{\boldsymbol{\theta}}(\bdelta)=\big[\hat{\beta}(\bdelta),\hat{\gamma}(\bdelta)\big]^{\top}=\big(\bv{H}^{\top}\bv{H}\big)^{-1}\bv{H}^{\top}\bv{r},
\end{equation}
where $\bv{H}=\bv{H}(\bdelta)$ has full column rank whenever the hypothesized ranges are not all equal. The fitted mean is then $\bv{H}\hat{\boldsymbol{\theta}}(\bdelta)=\bv{P}_{\bv{H}}\bv{r}$, with $\bv{P}_{\bv{H}}=\bv{H}(\bv{H}^{\top}\bv{H})^{-1}\bv{H}^{\top}$ the orthogonal projection onto the column space of $\bv{H}$, and the LS residual is
\begin{equation}\label{eq:resid}
\bv{r}-\bv{H}\hat{\boldsymbol{\theta}}(\bdelta)=\Pp_{\bv{H}}\bv{r},\qquad \Pp_{\bv{H}}=\bv{I}-\bv{P}_{\bv{H}} .
\end{equation}
Substituting~\eqref{eq:ls} and~\eqref{eq:resid} into~\eqref{eq:ll}, the log-likelihood maximized over $\boldsymbol{\theta}$ becomes
\begin{equation}\label{eq:cost}
\ln L\big(\bdelta,\hat{\boldsymbol{\theta}}(\bdelta)\big)=-\frac{M}{2}\ln(2\pi\sigma^2)-\frac{S(\bdelta)}{2\sigma^2},
\end{equation}
where the concentrated cost $S(\bdelta)=\|\Pp_{\bv{H}(\bdelta)}\bv{r}\|^2$ is the residual energy of the best straight-line fit of received power against hypothesized log-range. If the hypothesized positions are correct, $S(\bdelta)$ contains only shadowing; otherwise, the powers no longer lie on a line in $\bv{u}(\bdelta)$ and $S(\bdelta)$ grows.

\subsection{GLRT and Its Calibration Problem}

The generalized likelihood ratio test (GLRT) compares the best fit over a search region $\mathcal{D}$ with the fit at the declared positions. Using~\eqref{eq:cost}, twice the log-ratio of the maximized likelihoods is
\begin{equation}\label{eq:glrt}
T_{\mathrm{G}}=\frac{1}{\sigma^2}\Big[S(\bv{0})-\min_{\bdelta\in\mathcal{D}}S(\bdelta)\Big].
\end{equation}
However, because~\eqref{eq:glrt} involves a nonlinear search over the
region~$\mathcal{D}$, its exact finite-sample null distribution depends
on both~$\mathcal{D}$ and the geometry; Wilks' theorem provides only an
asymptotic $\chi^2_2$ approximation. Its threshold therefore depends on $\mathcal{D}$ and on the geometry, and must be set by simulation for every deployment. The estimate-then-threshold rule of~\cite{ghostbuster}, which compares $\|\hat{\bdelta}\|$ with a threshold for $\hat{\bdelta}=\arg\min_{\bdelta\in\mathcal{D}}S(\bdelta)$, suffers from the same problem.

\subsection{Rao Score Test}

The Rao test avoids the search because it only examines the likelihood in the neighborhood of $\bdelta=\bv{0}$, where all parameters are defined under $\mathcal{H}_0$~\cite{kay_detection}. It needs two ingredients, namely the gradient (score) of the log-likelihood with respect to $\bdelta$ and the Fisher information of $\bdelta$. Both are evaluated at $\bdelta=\bv{0}$ and at the $\mathcal{H}_0$ estimate $\hat{\boldsymbol{\theta}}_0=\hat{\boldsymbol{\theta}}(\bv{0})$ from~\eqref{eq:ls}.

\emph{Score.} The mean $\boldsymbol{\mu}=\beta\bv{1}-10\gamma\bv{u}(\bdelta)$ depends on $\bdelta$ only through $\bv{u}(\bdelta)$. Let
\begin{equation}\label{eq:G}
\bv{G}=\frac{\partial\bv{u}}{\partial\bdelta^{\top}}\bigg|_{\bdelta=\bv{0}}\in\mathbb{R}^{M\times 2},\qquad [\bv{G}]_{m}=-\frac{\bv{e}_{k,n}^{\top}}{\tilde{d}_{k,n}\ln 10},
\end{equation}
where row $m$ corresponds to the pair $(k,n)$, $\tilde{d}_{k,n}=\|\tilde{\bv{p}}_n-\bv{q}_k\|$ is the declared range, and $\bv{e}_{k,n}$ is the horizontal part of the unit vector from $\bv{q}_k$ toward $\tilde{\bv{p}}_n$. Each row of $\bv{G}$ is thus a line-of-sight direction weighted by the inverse range. Differentiating~\eqref{eq:ll} with $\partial\boldsymbol{\mu}/\partial\bdelta^{\top}=-10\gamma\bv{G}$, and noting from~\eqref{eq:resid} that the $\mathcal{H}_0$ residual is $\bv{r}-\bv{H}(\bv{0})\hat{\boldsymbol{\theta}}_0=\Pp\bv{r}$ with $\Pp=\Pp_{\bv{H}(\bv{0})}$, we have
\begin{equation}\label{eq:score}
\bv{s}=\frac{\partial\ln L}{\partial\bdelta}\bigg|_{\bv{0},\hat{\boldsymbol{\theta}}_0}=-\frac{10\hat{\gamma}_0}{\sigma^2}\,\bv{G}^{\top}\Pp\bv{r}.
\end{equation}

\emph{Fisher information.} Since $\boldsymbol{\theta}$ is unknown, the information about $\bdelta$ is reduced by the part that $\boldsymbol{\theta}$ can explain. For the Gaussian model~\eqref{eq:lin}, the Fisher information blocks are $\bv{I}_{\bdelta\bdelta}=(10\gamma/\sigma)^2\bv{G}^{\top}\bv{G}$, $\bv{I}_{\bdelta\boldsymbol{\theta}}=-(10\gamma/\sigma^2)\bv{G}^{\top}\bv{H}(\bv{0})$ and $\bv{I}_{\boldsymbol{\theta}\boldsymbol{\theta}}=\bv{H}(\bv{0})^{\top}\bv{H}(\bv{0})/\sigma^2$. The information that remains for $\bdelta$ after accounting for $\boldsymbol{\theta}$ is their Schur complement,
\begin{equation}\label{eq:J}
\bv{J}=\bv{I}_{\bdelta\bdelta}-\bv{I}_{\bdelta\boldsymbol{\theta}}\bv{I}_{\boldsymbol{\theta}\boldsymbol{\theta}}^{-1}\bv{I}_{\bdelta\boldsymbol{\theta}}^{\top}=\Big(\frac{10\gamma}{\sigma}\Big)^2\bv{G}^{\top}\Pp\bv{G}.
\end{equation}

\emph{Statistic.} The Rao statistic is $T_{\mathrm{R}}=\bv{s}^{\top}\bv{J}^{-1}\bv{s}$ with $\gamma=\hat{\gamma}_0$. Substituting~\eqref{eq:score} and~\eqref{eq:J}, the factor $(10\hat{\gamma}_0)^2$ cancels. Defining $\bv{B}=\Pp\bv{G}$ and using $(\Pp)^{\top}=\Pp=(\Pp)^2$, so that $\bv{G}^{\top}\Pp\bv{G}=\bv{B}^{\top}\bv{B}$, we obtain
\begin{equation}\label{eq:rao}
T_{\mathrm{R}}=\frac{1}{\sigma^2}\bv{r}^{\top}\bv{B}\big(\bv{B}^{\top}\bv{B}\big)^{-1}\bv{B}^{\top}\bv{r}=\frac{1}{\sigma^2}\big\|\bv{Q}^{\top}\bv{r}\big\|^2,
\end{equation}
where $\bv{Q}$ is an orthonormal basis for the column space of $\bv{B}$, so that $\bv{B}(\bv{B}^{\top}\bv{B})^{-1}\bv{B}^{\top}=\bv{Q}\bv{Q}^{\top}$. Two observations follow. First, the test does not require the path-loss exponent, and $\sigma$ enters only as a scale. Second, since the columns of $\bv{Q}$ lie in the range of $\Pp$, they are orthogonal to $\bv{1}$ and $\bv{u}(\bv{0})$. They therefore span the directions that a small falsification changes but the link parameters cannot absorb.

\begin{theorem}\label{thm:exact}
Let $q=\mathrm{rank}(\bv{B})$. Under $\mathcal{H}_0$, $T_{\mathrm{R}}\sim\chi^2_q$ for every $M$ and every geometry. Under $\mathcal{H}_1$, $T_{\mathrm{R}}\sim\chi^2_q(\lambda_{\mathrm{R}})$ with
\begin{equation}\label{eq:lamR}
\lambda_{\mathrm{R}}(\bdelta)=\Big(\frac{10\gamma}{\sigma}\Big)^2\big\|\bv{Q}^{\top}\bv{u}(\bdelta)\big\|^2.
\end{equation}
\end{theorem}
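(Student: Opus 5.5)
The key fact is that $\bv{Q}$ depends only on the declared positions $\tilde{\bv{p}}_n$ and the receiver positions $\bv{q}_k$. It does not depend on $\bv{r}$, so $T_{\mathrm{R}}$ is a quadratic form in a Gaussian vector, taken with respect to a fixed projection. First I would record this: $\bv{u}(\bv{0})$, $\bv{H}(\bv{0})$, $\Pp$, $\bv{G}$ from~\eqref{eq:G}, and hence $\bv{B}=\Pp\bv{G}$ and its orthonormal basis $\bv{Q}\in\mathbb{R}^{M\times q}$ are all deterministic functions of $\tilde{\bv{p}}_n$ and $\bv{q}_k$. Conditioning on the decoded set $\mathcal{M}$, which is determined by mean powers rather than noise, I take $\bv{Q}$ as a fixed matrix with $\bv{Q}^{\top}\bv{Q}=\bv{I}_q$. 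By~\eqref{eq:lin}, under either hypothesis $\bv{r}\sim\mathcal{N}(\bv{H}(\bdelta)\boldsymbol{\theta},\sigma^2\bv{I})$ with $\bdelta$ the true offset. Therefore
\[
\bv{z}=\bv{Q}^{\top}\bv{r}/\sigma\sim\mathcal{N}\big(\bv{Q}^{\top}\bv{H}(\bdelta)\boldsymbol{\theta}/\sigma,\ \bv{I}_q\big),
\]
and $T_{\mathrm{R}}=\|\bv{z}\|^2$ by~\eqref{eq:rao}. A sum of squares of $q$ independent unit-variance Gaussians is noncentral $\chi^2_q$ with noncentrality equal to the squared norm of the mean. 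It remains to compute that mean.

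For the mean, I use that the columns of $\bv{Q}$ lie in the range of $\Pp$, which is orthogonal to both $\bv{1}$ and $\bv{u}(\bv{0})$. Hence $\bv{Q}^{\top}\bv{1}=\bv{0}$ and $\bv{Q}^{\top}\bv{u}(\bv{0})=\bv{0}$. Writing $\bv{H}(\bdelta)\boldsymbol{\theta}=\beta\bv{1}-10\gamma\bv{u}(\bdelta)$ gives
\[
\bv{Q}^{\top}\bv{H}(\bdelta)\boldsymbol{\theta}=-10\gamma\,\bv{Q}^{\top}\bv{u}(\bdelta).
\]
Under $\mathcal{H}_0$ we have $\bdelta=\bv{0}$, so the mean vanishes and $T_{\mathrm{R}}\sim\chi^2_q$ for every $\beta,\gamma$, every $M$, and every geometry. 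The argument never uses an asymptotic approximation, and no dimension or sample-size condition enters; only $q$, the rank of $\bv{B}$, appears. Under $\mathcal{H}_1$ the noncentrality is $\|\!-10\gamma\bv{Q}^{\top}\bv{u}(\bdelta)\|^2/\sigma^2$, which is exactly~\eqref{eq:lamR}.

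The step that needs the most care is justifying that $\bv{Q}$ is independent of the data. The Rao statistic was derived by plugging in $\hat{\gamma}_0$, which depends on $\bv{r}$, but the factor $(10\hat{\gamma}_0)^2$ cancelled in~\eqref{eq:rao}, and I would state this cancellation explicitly. I would also make sure the model under $\mathcal{H}_1$ uses the true log-ranges $\bv{u}(\bdelta)$ while the projector is built at $\bdelta=\bv{0}$. The equality $\bv{B}(\bv{B}^{\top}\bv{B})^{-1}\bv{B}^{\top}=\bv{Q}\bv{Q}^{\top}$ holds with the inverse when $q=2$; if $q<2$, I would interpret the inverse as the pseudoinverse so that the identity still holds. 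Finally, I would note that treating $\mathcal{M}$ as fixed is consistent with the model, since decoding depends on the mean received power rather than the noise.
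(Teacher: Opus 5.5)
Your proposal is correct and follows the paper's proof essentially step for step. Both write $\bv{r}=\beta\bv{1}-10\gamma\bv{u}(\bdelta)+\bv{w}$, use $\bv{Q}^{\top}\bv{1}=\bv{0}$, $\bv{Q}^{\top}\bv{u}(\bv{0})=\bv{0}$ and $\bv{Q}^{\top}\bv{Q}=\bv{I}_q$ to obtain a unit-covariance Gaussian with mean $-10\gamma\bv{Q}^{\top}\bv{u}(\bdelta)/\sigma$, and read off the central or noncentral $\chi^2_q$ law; your extra points (that $\bv{Q}$ and $\mathcal{M}$ do not depend on the noise, that $\hat{\gamma}_0$ cancels, and the pseudoinverse reading when $q<2$) are left implicit in the paper and you handle them correctly.
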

\begin{proof}
Write $\bv{r}=\beta\bv{1}-10\gamma\bv{u}(\bdelta)+\bv{w}$ with $\bv{w}\sim\mathcal{N}(\bv{0},\sigma^2\bv{I})$. Since $\bv{Q}^{\top}\bv{1}=\bv{0}$, we have $\bv{Q}^{\top}\bv{r}=-10\gamma\bv{Q}^{\top}\bv{u}(\bdelta)+\bv{Q}^{\top}\bv{w}$, where $\bv{Q}^{\top}\bv{w}\sim\mathcal{N}(\bv{0},\sigma^2\bv{I}_q)$ because $\bv{Q}^{\top}\bv{Q}=\bv{I}_q$. Under $\mathcal{H}_0$, $\bv{Q}^{\top}\bv{u}(\bv{0})=\bv{0}$, so $T_{\mathrm{R}}=\|\bv{Q}^{\top}\bv{w}\|^2/\sigma^2\sim\chi^2_q$. Under $\mathcal{H}_1$, the mean of $\bv{Q}^{\top}\bv{r}/\sigma$ has squared norm~\eqref{eq:lamR}.
\end{proof}

As a result, for a target false-alarm probability $P_{\mathrm{FA}}$, the threshold and the detection probability are given by
\begin{equation}\label{eq:pd}
\eta=F^{-1}_{\chi^2_q}(1-P_{\mathrm{FA}}),\qquad P_{\mathrm{D}}(\bdelta)=Q_{\chi^2_q(\lambda_{\mathrm{R}}(\bdelta))}(\eta),
\end{equation}
where $F^{-1}_{\chi^2_q}$ is the inverse chi-squared distribution function and $Q_{\chi^2_q(\lambda)}$ is the complementary distribution function of the non-central chi-squared distribution. In practice, $q=2$ whenever $\bv{B}$ has full column rank, which for a hovering aircraft requires at least four receivers. Unlike the GLRT, $T_{\mathrm{R}}$ needs no search, because $\bv{Q}$ depends only on known positions.

\section{Fundamental Limits and Design Rules}\label{sec:limits}

Theorem~\ref{thm:exact} characterizes one particular detector. In this section, we bound the performance of every detector that is invariant to the link parameters. We then show that this bound saturates for large falsifications and derive two design rules.

\subsection{A Bound for Any Invariant Detector}

Since $\beta$ and $\gamma$ are unknown, we restrict attention to
detectors invariant to shifts of the form $\mathbf{r} \mapsto
\mathbf{r} + \mathbf{H}(0)\boldsymbol{\theta}'$ for any
$\boldsymbol{\theta}' \in \mathbb{R}^2$. Such detectors can only
use the component of $\mathbf{r}$ orthogonal to the
nuisance-parameter subspace, namely
\begin{equation}\label{eq:z}
\bv{z}=\bv{U}^{\top}\bv{r}\in\mathbb{R}^{M-2},
\end{equation}
where the columns of $\bv{U}$ form an orthonormal basis of the range of $\Pp$, so that $\bv{U}\bv{U}^{\top}=\Pp$ and $\bv{U}^{\top}\bv{H}(\bv{0})=\bv{0}$. Following the proof of Theorem~\ref{thm:exact}, $\bv{z}\sim\mathcal{N}(\bv{0},\sigma^2\bv{I})$ under $\mathcal{H}_0$ and $\bv{z}\sim\mathcal{N}(\boldsymbol{\mu}_z,\sigma^2\bv{I})$ under $\mathcal{H}_1$, with $\boldsymbol{\mu}_z=-10\gamma\bv{U}^{\top}\bv{u}(\bdelta)$.

\begin{proposition}\label{prop:bound}
For any test of~\eqref{eq:hyp} that is invariant to $\boldsymbol{\theta}$ and has false-alarm probability $P_{\mathrm{FA}}$,
\begin{equation}\label{eq:bound}
P_{\mathrm{D}}(\bdelta)\le Q\big(Q^{-1}(P_{\mathrm{FA}})-\sqrt{\lambda_{\mathrm{c}}(\bdelta)}\big),\quad \lambda_{\mathrm{c}}(\bdelta)=\frac{\|\boldsymbol{\mu}_z\|^2}{\sigma^2},
\end{equation}
where $Q(\cdot)$ is the Gaussian tail function. Moreover, $\lambda_{\mathrm{c}}(\bdelta)=(10\gamma/\sigma)^2\|\Pp\bv{u}(\bdelta)\|^2\ge\lambda_{\mathrm{R}}(\bdelta)$.
\end{proposition}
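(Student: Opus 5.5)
The plan is to reduce the problem to a Gaussian shift experiment on $\bv{z}$ and then apply the Neyman--Pearson lemma pointwise in $\bdelta$. First, I would justify that an invariant test depends on $\bv{r}$ only through $\bv{z}=\bv{U}^{\top}\bv{r}$. Decompose $\bv{r}=\bv{P}_{\bv{H}(\bv{0})}\bv{r}+\bv{U}\bv{z}$. The group $\bv{r}\mapsto\bv{r}+\bv{H}(\bv{0})\boldsymbol{\theta}'$ acts transitively on the first component and leaves $\bv{z}$ fixed, so $\bv{z}$ is a maximal invariant, and any invariant test $\phi(\bv{r})$ equals $\tilde\phi(\bv{z})$. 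As already noted after \eqref{eq:z}, $\bv{z}\sim\mathcal{N}(\bv{0},\sigma^2\bv{I})$ under $\mathcal{H}_0$ for every $\boldsymbol{\theta}$, so the false-alarm constraint is simply $\mathbb{E}_0\tilde\phi(\bv{z})\le P_{\mathrm{FA}}$.

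Second, I would fix a particular $\bdelta$ and $\gamma$, giving $\boldsymbol{\mu}_z$, and consider the simple-versus-simple problem $\mathcal{N}(\bv{0},\sigma^2\bv{I})$ versus $\mathcal{N}(\boldsymbol{\mu}_z,\sigma^2\bv{I})$. Any invariant test of the composite problem is in particular a size-$P_{\mathrm{FA}}$ test of this simple problem, so its power at $\bdelta$ is at most the Neyman--Pearson power. The likelihood ratio is monotone in $\boldsymbol{\mu}_z^{\top}\bv{z}$, which is $\mathcal{N}(0,\sigma^2\|\boldsymbol{\mu}_z\|^2)$ under the null and has mean $\|\boldsymbol{\mu}_z\|^2$ under the alternative. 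Normalizing by $\sigma\|\boldsymbol{\mu}_z\|$ gives the standard form $Q(Q^{-1}(P_{\mathrm{FA}})-\sqrt{\lambda_{\mathrm{c}}})$, where $\lambda_{\mathrm{c}}=\|\boldsymbol{\mu}_z\|^2/\sigma^2$. This clairvoyant bound is not generally attainable, since the optimal direction depends on the unknown $\bdelta$, but it still serves as an upper bound.

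Third, I would verify the identities. Since $\bv{U}$ has orthonormal columns, $\|\bv{U}^{\top}\bv{u}(\bdelta)\|^2=\bv{u}^{\top}\bv{U}\bv{U}^{\top}\bv{u}=\|\Pp\bv{u}(\bdelta)\|^2$, which gives $\lambda_{\mathrm{c}}=(10\gamma/\sigma)^2\|\Pp\bv{u}(\bdelta)\|^2$. For the inequality, the columns of $\bv{Q}$ lie in the range of $\Pp$, so $\bv{Q}\bv{Q}^{\top}\preceq\Pp$ as projections onto nested subspaces. Moreover, $\bv{Q}^{\top}=\bv{Q}^{\top}\Pp$, so $\|\bv{Q}^{\top}\bv{u}\|^2=\|\bv{Q}^{\top}\Pp\bv{u}\|^2\le\|\Pp\bv{u}\|^2$. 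Comparing with \eqref{eq:lamR} then gives $\lambda_{\mathrm{c}}\ge\lambda_{\mathrm{R}}$.

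The only delicate step is the first one: formally establishing that invariance reduces the problem to functions of $\bv{z}$, including measurability and whether randomized tests are allowed. I would handle it with the maximal-invariant argument above. The rest is a routine Neyman--Pearson computation.
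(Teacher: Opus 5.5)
Your proposal is correct and follows the paper's proof. You reduce to $\bv{z}$, apply Neyman--Pearson pointwise in $\bdelta$ through the matched filter $\boldsymbol{\mu}_z^{\top}\bv{z}$, and obtain the identity and the inequality from $\bv{U}\bv{U}^{\top}=\Pp$ and from the columns of $\bv{Q}$ lying in the range of $\Pp$; your maximal-invariant argument just makes explicit the reduction to $\bv{z}$ that the paper asserts in the text before the proposition.
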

\begin{proof}
For a fixed $\bdelta$, the best test on $\bv{z}$ is the Neyman--Pearson matched filter $\boldsymbol{\mu}_z^{\top}\bv{z}$. This statistic is Gaussian with variance $\sigma^2\|\boldsymbol{\mu}_z\|^2$ and with mean $0$ under $\mathcal{H}_0$ and $\|\boldsymbol{\mu}_z\|^2$ under $\mathcal{H}_1$, which gives the right-hand side of~\eqref{eq:bound}. A detector that does not know $\bdelta$ cannot do better. The equality follows from $\|\bv{U}^{\top}\bv{u}\|^2=\bv{u}^{\top}\Pp\bv{u}=\|\Pp\bv{u}\|^2$, and the inequality from $\|\bv{Q}^{\top}\bv{u}\|\le\|\bv{U}^{\top}\bv{u}\|$, because the columns of $\bv{Q}$ lie in the range of $\Pp$.
\end{proof}

We refer to $\lambda_{\mathrm{c}}$ as the clairvoyant non-centrality: it is the energy of the RSS signature of the falsification after removing everything the link parameters can absorb. Setting the right-hand side of~\eqref{eq:bound} equal to a target $P_{\mathrm{D}}$ and solving for $\lambda_{\mathrm{c}}$, an operating point $(P_{\mathrm{FA}},P_{\mathrm{D}})$ can be reached only if
\begin{equation}\label{eq:lamstar}
\lambda_{\mathrm{c}}(\bdelta)\ge\lambda^{\star}=\big(Q^{-1}(P_{\mathrm{FA}})-Q^{-1}(P_{\mathrm{D}})\big)^2 .
\end{equation}
We define the \emph{stealth region} $\mathcal{S}=\{\bdelta:\lambda_{\mathrm{c}}(\bdelta)<\lambda^{\star}\}$ as the set of falsifications that no invariant verifier detects at $(P_{\mathrm{FA}},P_{\mathrm{D}})$.

\subsection{Radial Saturation}

Consider now a falsification $\bdelta=\rho\bv{v}$ along a fixed horizontal unit vector $\bv{v}$. Intuitively, one expects $\lambda_{\mathrm{c}}(\rho\bv{v})$ to grow without bound with $\rho$, since a larger falsification should be easier to detect. The following theorem shows that this intuition fails when $\beta$ is unknown.

\begin{theorem}\label{thm:sat}
Let $\boldsymbol{\ell}=[\log_{10}d_{k,n}]_{(k,n)\in\mathcal{M}}$ collect the true log-ranges and let $\bv{a}(\bv{v})$ collect $a_{k,n}=\bv{v}^{\top}(\bv{q}_k-\bv{p}_n)$, using horizontal components. In the unanchored case, provided $a(\mathbf{v})$ is not constant across observations,
\begin{equation}\label{eq:laminf}
\lim_{\rho\to\infty}\lambda_{\mathrm{c}}(\rho\bv{v})=\lambda_{\infty}(\bv{v})=\Big(\frac{10\gamma}{\sigma}\Big)^2\big\|\Pp_{[\bv{1},\,\bv{a}(\bv{v})]}\,\boldsymbol{\ell}\big\|^2 .
\end{equation}
\end{theorem}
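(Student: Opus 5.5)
We work directly from the expression $\lambda_{\mathrm{c}}(\bdelta)=(10\gamma/\sigma)^2\|\Pp\bv{u}(\bdelta)\|^2$ in Proposition~\ref{prop:bound}. The first step is to identify what $\bv{u}(\bdelta)$ and $\Pp$ are in this expression. Here $\Pp$ projects away from $\mathrm{span}\{\bv{1},\bv{u}(\bv{0})\}$, where $\bv{u}(\bv{0})$ collects the declared log-ranges $\log_{10}\|\tilde{\bv{p}}_n-\bv{q}_k\|$. The noise-free signature entering $\boldsymbol{\mu}_z$ consists of the true log-ranges $\boldsymbol{\ell}$, since the data are generated at the true positions $\bv{p}_n=\tilde{\bv{p}}_n-\bdelta$. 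So the quantity to study is
\[
\lambda_{\mathrm{c}}(\rho\bv{v})=\Big(\frac{10\gamma}{\sigma}\Big)^2\big\|\Pp_{[\bv{1},\,\bv{u}_\rho]}\,\boldsymbol{\ell}\big\|^2,\qquad [\bv{u}_\rho]_{k,n}=\log_{10}\|\bv{p}_n+\rho\bv{v}-\bv{q}_k\|.
\]
In this expression $\boldsymbol{\ell}$ is fixed, and only the nuisance subspace moves with $\rho$. The claim therefore reduces to a statement about the limit of a two-dimensional subspace.

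Next we expand the declared log-ranges for large $\rho$. Write $\bv{c}_{k,n}=\bv{p}_n-\bv{q}_k$, so that $\|\rho\bv{v}+\bv{c}\|^2=\rho^2+2\rho\,\bv{v}^{\top}\bv{c}+\|\bv{c}\|^2$, with the vertical component of $\bv{c}$ included in $\|\bv{c}\|$. Then
\[
[\bv{u}_\rho]_{k,n}=\log_{10}\rho+\frac{1}{\ln 10}\Big(\frac{\bv{v}^{\top}\bv{c}_{k,n}}{\rho}+O(\rho^{-2})\Big)=\log_{10}\rho-\frac{a_{k,n}}{\rho\ln 10}+O(\rho^{-2}),
\]
because $a_{k,n}=-\bv{v}^{\top}\bv{c}_{k,n}$. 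The span of $\{\bv{1},\bv{u}_\rho\}$ is unchanged if we subtract $\log_{10}\rho\,\bv{1}$ and rescale by $-\rho\ln 10$. It therefore equals $\mathrm{span}\{\bv{1},\,\bv{a}+\bv{e}_\rho\}$ with $\|\bv{e}_\rho\|=O(1/\rho)$. The constant part of $\bv{a}$ would be absorbed by $\bv{1}$ anyway.

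The main step is continuity of the projection. Let $\bv{A}_\rho=[\bv{1},\bv{a}+\bv{e}_\rho]$ and $\bv{A}_\infty=[\bv{1},\bv{a}]$. The hypothesis that $\bv{a}$ is not constant means $\bv{A}_\infty$ has full column rank, so $\bv{A}_\infty^{\top}\bv{A}_\infty$ is invertible. Matrix inversion is continuous at invertible matrices, so $\bv{P}_{\bv{A}_\rho}=\bv{A}_\rho(\bv{A}_\rho^{\top}\bv{A}_\rho)^{-1}\bv{A}_\rho^{\top}\to\bv{P}_{\bv{A}_\infty}$. Hence $\Pp_{\bv{A}_\rho}\boldsymbol{\ell}\to\Pp_{[\bv{1},\bv{a}]}\boldsymbol{\ell}$, and the squared norm converges, which gives~\eqref{eq:laminf}.

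The obstacle I expect is this rank condition, and it is exactly where the non-constancy hypothesis is used. If $\bv{a}$ were constant, the rescaled second column would collapse onto $\bv{1}$. The limit would then be governed by the $O(\rho^{-2})$ terms and could differ. A second point to check is that the decoded set $\mathcal{M}$ is fixed by the true positions, which do not depend on $\rho$, so $\mathcal{M}$ does not change along the sequence.
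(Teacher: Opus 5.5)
Your proposal is correct and follows the paper's proof essentially step for step. You observe that $\bv{u}(\rho\bv{v})=\boldsymbol{\ell}$ is fixed, so only $\Pp$ varies; you expand the declared log-ranges as $\log_{10}\rho\,\bv{1}-\bv{a}/(\rho\ln 10)+O(\rho^{-2})$, absorb the first term into the intercept column, and pass to the limit subspace $\mathrm{span}\{\bv{1},\bv{a}\}$. Your explicit continuity argument for the projector, which uses the full column rank of $[\bv{1},\bv{a}]$ from the non-constancy hypothesis, and your remark that $\mathcal{M}$ does not depend on $\rho$ make rigorous two points the paper treats only briefly.
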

\begin{proof}
At $\bdelta=\rho\bv{v}$, the vector $\bv{u}(\rho\bv{v})$ in~\eqref{eq:u} is evaluated at the true positions, so $\bv{u}(\rho\bv{v})=\boldsymbol{\ell}$ for every $\rho$. Hence, only the projector $\Pp$ changes with $\rho$, through the declared ranges. Expanding $\tilde{d}_{k,n}=\|\bv{p}_n+\rho\bv{v}-\bv{q}_k\|$ for large $\rho$ gives $\tilde{d}_{k,n}=\rho-a_{k,n}+O(\rho^{-1})$, and hence
\begin{equation}\label{eq:uexp}
\bv{u}(\bv{0})=\log_{10}\rho\;\bv{1}-\frac{\bv{a}}{\rho\ln 10}+O(\rho^{-2}).
\end{equation}
The first term of~\eqref{eq:uexp} lies in the span of $\bv{1}$, so the column space of $\bv{H}(\bv{0})=[\bv{1},-10\bv{u}(\bv{0})]$ equals that of $[\bv{1},\,\bv{a}+O(\rho^{-1})]$, which converges to $\mathrm{span}\{\bv{1},\bv{a}\}$. Therefore $\Pp\to\Pp_{[\bv{1},\bv{a}]}$, and substituting into $\lambda_{\mathrm{c}}=(10\gamma/\sigma)^2\|\Pp\boldsymbol{\ell}\|^2$ gives~\eqref{eq:laminf}.
\end{proof}

The reason for this saturation is as follows. As the declared position recedes, the declared ranges become large and nearly equal. The free intercept $\beta$ absorbs their common part $\log_{10}\rho$, and the unconstrained exponent $\gamma$ absorbs their first-order variation $\bv{a}$. The only evidence left is therefore the part of the true log-range profile $\boldsymbol{\ell}$ that is not affine in $\bv{a}$. Consequently, if $\lambda_{\infty}(\bv{v})<\lambda^{\star}$, the stealth region is unbounded along $\bv{v}$, and all sufficiently large falsifications in that direction remain undetectable at the operating point.

To find the worst bearing, we evaluate the projection in~\eqref{eq:laminf} in two steps. Projecting out $\bv{1}$ removes the means: let $\tilde{\boldsymbol{\ell}}$ and $\tilde{\bv{a}}$ denote $\boldsymbol{\ell}$ and $\bv{a}$ with their means removed. Projecting out $\tilde{\bv{a}}$ then removes the component of $\tilde{\boldsymbol{\ell}}$ along $\tilde{\bv{a}}$, so that
\begin{equation}\label{eq:twostep}
\big\|\Pp_{[\bv{1},\bv{a}]}\boldsymbol{\ell}\big\|^2=\|\tilde{\boldsymbol{\ell}}\|^2-\frac{(\tilde{\bv{a}}^{\top}\tilde{\boldsymbol{\ell}})^2}{\|\tilde{\bv{a}}\|^2}.
\end{equation}
Since $\bv{a}$ is linear in $\bv{v}$, we can write $\tilde{\bv{a}}=\tilde{\bv{S}}\bv{v}$, where the rows of $\tilde{\bv{S}}$ are the mean-removed horizontal offsets $\bv{q}_k-\bv{p}_n$. With $\bv{b}=\tilde{\bv{S}}^{\top}\tilde{\boldsymbol{\ell}}$ and $\bv{A}=\tilde{\bv{S}}^{\top}\tilde{\bv{S}}$, the last term of~\eqref{eq:twostep} becomes the generalized Rayleigh quotient $(\bv{b}^{\top}\bv{v})^2/(\bv{v}^{\top}\bv{A}\bv{v})$, whose maximum over $\bv{v}$ is $\bv{b}^{\top}\bv{A}^{-1}\bv{b}$. This yields the following result.

\begin{corollary}[Worst bearing]\label{cor:worst}
Provided $\mathbf{A} \succ 0$ and $\mathbf{b} \neq \mathbf{0}$, the bearing that minimizes~\eqref{eq:laminf} is $\bv{v}^{\star}\propto\bv{A}^{-1}\bv{b}$, and
\begin{equation}\label{eq:worst}
\min_{\bv{v}}\lambda_{\infty}(\bv{v})=\Big(\frac{10\gamma}{\sigma}\Big)^2\Big(\|\tilde{\boldsymbol{\ell}}\|^2-\bv{b}^{\top}\bv{A}^{-1}\bv{b}\Big).
\end{equation}
\end{corollary}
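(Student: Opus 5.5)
The plan is to start from Theorem~\ref{thm:sat} and rewrite $\lambda_\infty(\bv{v})$ as the squared residual of $\boldsymbol{\ell}$ after regression on $[\bv{1},\bv{a}(\bv{v})]$. I will then reduce the minimization over bearings to a Rayleigh-quotient maximization. The positive prefactor $(10\gamma/\sigma)^2$ does not depend on $\bv{v}$, so it suffices to minimize $\|\Pp_{[\bv{1},\bv{a}]}\boldsymbol{\ell}\|^2$.

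First I will justify the two-step projection~\eqref{eq:twostep}. Because $\mathrm{span}\{\bv{1},\bv{a}\}=\mathrm{span}\{\bv{1},\tilde{\bv{a}}\}$ and $\tilde{\bv{a}}\perp\bv{1}$, the projector onto this span splits as $\bv{P}_{\bv{1}}+\tilde{\bv{a}}\tilde{\bv{a}}^{\top}/\|\tilde{\bv{a}}\|^2$. Since $\tilde{\bv{a}}^{\top}\boldsymbol{\ell}=\tilde{\bv{a}}^{\top}\tilde{\boldsymbol{\ell}}$, the identity follows. This requires $\tilde{\bv{a}}\neq\bv{0}$, which is the non-constancy assumption of Theorem~\ref{thm:sat}. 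That assumption holds for every unit $\bv{v}$ once $\bv{A}\succ0$, since then $\|\tilde{\bv{a}}\|^2=\bv{v}^{\top}\bv{A}\bv{v}>0$.

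Next, write $a_{k,n}=\bv{v}^{\top}(\bv{q}_k-\bv{p}_n)$, so that $\bv{a}=\bv{S}\bv{v}$. Mean removal is the linear map $\bv{I}-\bv{1}\bv{1}^{\top}/M$, and it commutes with right-multiplication by $\bv{v}$, which gives $\tilde{\bv{a}}=\tilde{\bv{S}}\bv{v}$. Then
\begin{equation*}
\frac{(\tilde{\bv{a}}^{\top}\tilde{\boldsymbol{\ell}})^2}{\|\tilde{\bv{a}}\|^2}=\frac{(\bv{b}^{\top}\bv{v})^2}{\bv{v}^{\top}\bv{A}\bv{v}}.
\end{equation*}
This ratio is homogeneous of degree zero in $\bv{v}$, so the unit-norm constraint can be dropped and any nonzero $\bv{v}\in\mathbb{R}^2$ is allowed.

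The main step is maximizing this quotient. Substitute $\bv{x}=\bv{A}^{1/2}\bv{v}$, which is valid because $\bv{A}\succ0$. The quotient becomes $(\bv{c}^{\top}\bv{x})^2/\|\bv{x}\|^2$ with $\bv{c}=\bv{A}^{-1/2}\bv{b}$. By Cauchy--Schwarz it is at most $\|\bv{c}\|^2=\bv{b}^{\top}\bv{A}^{-1}\bv{b}$, with equality if and only if $\bv{x}\propto\bv{c}$, that is, $\bv{v}\propto\bv{A}^{-1}\bv{b}$. The hypothesis $\bv{b}\neq\bv{0}$ guarantees that this direction is well defined and nonzero; normalizing it gives the unit bearing $\bv{v}^{\star}$, determined up to sign. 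Substituting the maximum into~\eqref{eq:twostep} and multiplying by $(10\gamma/\sigma)^2$ yields~\eqref{eq:worst}. I would also note that the bracketed quantity is nonnegative, because it is the squared norm of a projection.

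The only delicate point is keeping the hypotheses straight. $\bv{A}\succ0$ ensures that the limit in Theorem~\ref{thm:sat} exists for every bearing and that the change of variables is legitimate. $\bv{b}\neq\bv{0}$ ensures that the maximizer is unique up to sign; if $\bv{b}=\bv{0}$, every bearing would give the same value.
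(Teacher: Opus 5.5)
Your proposal is correct and follows the paper's argument. Like the paper, you use the two-step projection \eqref{eq:twostep}, substitute $\tilde{\bv{a}}=\tilde{\bv{S}}\bv{v}$ to turn the subtracted term into the generalized Rayleigh quotient $(\bv{b}^{\top}\bv{v})^2/(\bv{v}^{\top}\bv{A}\bv{v})$, and maximize it at $\bv{v}\propto\bv{A}^{-1}\bv{b}$, while also filling in two details the paper leaves implicit: the Cauchy--Schwarz step after whitening by $\bv{A}^{1/2}$, and the fact that $\bv{A}\succ 0$ guarantees the non-constancy hypothesis of Theorem~\ref{thm:sat} on every bearing.
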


Corollary~\ref{cor:worst} removes the need for a bearing search: the security-relevant quantity is a direct computation on the geometry. The same argument also quantifies the value of knowing the transmit power.

\begin{corollary}[Anchored link]\label{cor:anch}
If $\beta$ is known and $\gamma$ is unknown, the limit is $(10\gamma/\sigma)^2\|\tilde{\boldsymbol{\ell}}\|^2$ on every bearing, which upper-bounds~\eqref{eq:worst}. If both parameters are known, $\lambda_{\mathrm{c}}(\rho\bv{v})\to\infty$.
\end{corollary}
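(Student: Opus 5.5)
We prove Corollary~\ref{cor:anch} by rerunning the proof of Theorem~\ref{thm:sat} with a smaller nuisance subspace. The structure is the same. By Proposition~\ref{prop:bound}, $\lambda_{\mathrm{c}}=(10\gamma/\sigma)^2\|\Pp\boldsymbol{\ell}\|^2$, with $\Pp$ the projector onto the orthogonal complement of the nuisance directions at the declared positions. We also have $\bv{u}(\rho\bv{v})=\boldsymbol{\ell}$ for every $\rho$. With $\beta$ known, subtract $\beta\bv{1}$ from $\bv{r}$. The residual mean is then $-10\gamma\bv{u}(\bdelta)$. The invariance group reduces to shifts along $\bv{u}(\bv{0})$ alone, and the mean shift under $\mathcal{H}_1$ becomes $-10\gamma\bv{u}(\bv{0})\gamma_0'$-absorbable part versus $-10\gamma\boldsymbol{\ell}$. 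We state this carefully below, since the anchored case also changes which signal difference matters.

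\emph{Anchored step.} Under $\mathcal{H}_0$ the mean is $\beta\bv{1}-10\gamma\bv{u}(\bv{0})$. Under $\mathcal{H}_1$ it is $\beta\bv{1}-10\gamma\boldsymbol{\ell}$. Once $\beta\bv{1}$ is removed, the only free parameter is $\gamma$. A $\gamma$-invariant detector therefore sees $\|\Pp_{\bv{u}(\bv{0})}\boldsymbol{\ell}\|^2$, where $\Pp_{\bv{u}(\bv{0})}$ projects orthogonally to the single vector $\bv{u}(\bv{0})$. Expansion~\eqref{eq:uexp} gives $\bv{u}(\bv{0})/\|\bv{u}(\bv{0})\|\to\bv{1}/\sqrt{M}$, because $\log_{10}\rho\,\bv{1}$ dominates and the correction is $O(\rho^{-1}/\log\rho)$. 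The rank-one projector is continuous in its nonzero generating vector, so $\Pp_{\bv{u}(\bv{0})}\to\Pp_{\bv{1}}$. The limit is therefore $(10\gamma/\sigma)^2\|\Pp_{\bv{1}}\boldsymbol{\ell}\|^2=(10\gamma/\sigma)^2\|\tilde{\boldsymbol{\ell}}\|^2$. This is independent of $\bv{v}$, which gives the claim ``on every bearing.''

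\emph{Comparison with~\eqref{eq:worst}.} By~\eqref{eq:twostep}, $\|\Pp_{[\bv{1},\bv{a}]}\boldsymbol{\ell}\|^2=\|\tilde{\boldsymbol{\ell}}\|^2-(\tilde{\bv{a}}^{\top}\tilde{\boldsymbol{\ell}})^2/\|\tilde{\bv{a}}\|^2\le\|\tilde{\boldsymbol{\ell}}\|^2$. More simply, $\mathrm{span}\{\bv{1}\}\subseteq\mathrm{span}\{\bv{1},\bv{a}\}$, so projecting out the larger space cannot increase the norm. Hence $\lambda_\infty(\bv{v})$ is at most the anchored limit for every $\bv{v}$, and so is the minimum in~\eqref{eq:worst}. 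The bound $\bv{b}^{\top}\bv{A}^{-1}\bv{b}\ge 0$ from $\bv{A}\succ0$ gives the same conclusion directly.

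\emph{Fully known case.} Now there is no nuisance, so $\Pp=\bv{I}$. The clairvoyant non-centrality is $(10\gamma/\sigma)^2\|\bv{u}(\bv{0})-\boldsymbol{\ell}\|^2$, the squared norm of the full mean difference between the hypotheses. By~\eqref{eq:uexp}, each entry of $\bv{u}(\bv{0})$ equals $\log_{10}\rho+O(\rho^{-1})$, while $\boldsymbol{\ell}$ is fixed. The norm therefore grows like $\sqrt{M}\log_{10}\rho\to\infty$.

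The main obstacle is the modelling point in the anchored step. There, the invariance group and the mean difference must be redefined consistently. It has to be clear that the absorbed direction is $\bv{u}(\bv{0})$, the $\mathcal{H}_0$ regressor, and that the quantity whose energy survives is $\boldsymbol{\ell}$ minus its component along $\bv{u}(\bv{0})$. This mirrors Proposition~\ref{prop:bound}. Once that is set, the limit follows from the same continuity-of-projector argument used in Theorem~\ref{thm:sat}.
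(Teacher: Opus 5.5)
Your proposal is correct and follows the paper's proof essentially step for step: subtract $\beta\bv{1}$, project out the single regressor $\bv{u}(\bv{0})$, whose direction tends to $\bv{1}$ by~\eqref{eq:uexp} so that only $\tilde{\boldsymbol{\ell}}$ survives on every bearing, and in the fully known case observe that nothing is projected out, so $\lambda_{\mathrm{c}}$ grows like $\log^2\rho$. Your nested-subspace argument, $\mathrm{span}\{\bv{1}\}\subseteq\mathrm{span}\{\bv{1},\bv{a}\}$, usefully spells out the ``upper-bounds~\eqref{eq:worst}'' claim that the paper leaves implicit; the garbled ``$\gamma_0'$-absorbable'' sentence in your first paragraph should simply be deleted.
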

\begin{proof}
With $\beta$ known, the verifier subtracts $\beta\bv{1}$ from $\bv{r}$, so only $\bv{u}(\bv{0})$ remains in the regressor. By~\eqref{eq:uexp}, its direction converges to $\bv{1}$, and projecting $\boldsymbol{\ell}$ onto the complement of $\bv{1}$ leaves $\tilde{\boldsymbol{\ell}}$. With both parameters known, nothing is projected out, and $\lambda_{\mathrm{c}}=(10\gamma/\sigma)^2\|\boldsymbol{\ell}-\bv{u}(\bv{0})\|^2$ grows as $\log^2\rho$.
\end{proof}

\subsection{Design Rules}

\emph{Observation length.} For a hovering aircraft, every epoch contributes the same $K$ entries to $\boldsymbol{\ell}$ and $\bv{a}$. Each squared norm in~\eqref{eq:twostep} is therefore $N$ times its single-epoch value, so $\lambda_{\infty}$ grows linearly with the number of packets. The minimum number of packets that bounds the stealth region in every direction is then
\begin{equation}\label{eq:nmin}
N_{\min}=\big\lceil\lambda^{\star}/\lambda^{(1)}_{\infty}\big\rceil,
\end{equation}
where $\lambda^{(1)}_{\infty}$ is~\eqref{eq:worst} evaluated for a single epoch. For a moving aircraft, $N_{\min}$ is the smallest $N$ for which~\eqref{eq:worst} exceeds $\lambda^{\star}$, which bisection finds at negligible cost. Since $\lambda_{\infty}\propto\gamma^2/\sigma^2$, doubling the shadowing standard deviation quadruples $N_{\min}$. Note that $N_{\min}$ is necessary for every invariant verifier, not only for the Rao test.

\emph{Receiver placement.} The limit~\eqref{eq:laminf} depends on the geometry only through the spread of the true log-ranges. Consider receivers on a ring of radius $R$ and a target at ground range $\rho$. When $R\ll\rho$, the target is far from all receivers, so the ranges are nearly equal. When $R\gg\rho$, the target is close to the center of the ring, where all ranges are again nearly equal to $R$. In both cases $\tilde{\boldsymbol{\ell}}\approx\bv{0}$ and $N_{\min}$ is large, so the optimum radius lies in between; Section~\ref{sec:results} shows that it is close to $\rho$.

\section{Numerical Results}\label{sec:results}

In this section, we verify the analysis by Monte Carlo simulation and evaluate the limits of Section~\ref{sec:limits}. Unless otherwise stated, $K=6$ receivers at 3~m height are placed on a ring of radius 600~m. The aircraft flies east at 90~m altitude and 12~m/s, 300~m north of the asset, with 10~dBm EIRP over Bluetooth~5 long range at 2.44~GHz. Each receiver has a sensitivity of $-103$~dBm.

For the channel, we use $\gamma=2.34$ and $\sigma=3.4$~dB, within the range reported in~\cite{khawaja_survey}, so that $N=20$ packets yield $M=120$ decoded measurements. The GLRT searches $\mathcal{D}=[-2,2]^2$~km$^2$ on a 50~m grid, and up to $4\times10^5$ Monte Carlo trials are used. Throughout, the operating point is $(P_{\mathrm{FA}},P_{\mathrm{D}})=(10^{-3},0.9)$, for which $\lambda^{\star}=19.1$ in~\eqref{eq:lamstar}. In comparison, the Rao test with $q=2$ requires $\lambda_{\mathrm{R}}\ge 23.8$ to reach the same point. Since every non-centrality in Section~\ref{sec:limits} scales with $(\gamma/\sigma)^2$, the packet counts reported below scale approximately with $(\sigma/\gamma)^2$. For example, $\sigma=4$~dB instead of 3.4~dB increases them by a factor of 1.38.

\subsection{Calibration and Detection Performance}

We first verify Theorem~\ref{thm:exact}. Under $\mathcal{H}_0$, the empirical mean and variance of $T_{\mathrm{R}}$ are 2.00 and 3.99, against 2 and 4 for $\chi^2_2$, and a Kolmogorov--Smirnov test does not reject the $\chi^2_2$ hypothesis ($p=0.12$). With the analytical threshold for $P_{\mathrm{FA}}=10^{-3}$, the realized false-alarm rate of the Rao test is $0.99\times10^{-3}$. The same threshold yields $1.16\times10^{-3}$ for the GLRT, and this mismatch changes with $\mathcal{D}$ and the geometry.

In Fig.~\ref{fig:roc}, we show the receiver operating characteristics of the Rao test, the GLRT and the estimate-then-threshold detector, together with the bound~\eqref{eq:bound}. Here, the falsification is 200~m at a bearing of $60^{\circ}$, measured counterclockwise from east. The closed-form expression~\eqref{eq:pd} matches the simulated Rao test within 0.005 at every plotted point. In comparison, the GLRT, whose threshold is set from $10^5$ null trials, performs similarly, with $P_{\mathrm{D}}=0.70$ against 0.67 for the Rao test at $P_{\mathrm{FA}}=10^{-3}$. The remaining gap of 0.18 to the bound, which equals 0.85 at the same false-alarm rate, is the price of not knowing the direction of the falsification.

\begin{figure}[t]
\centering
\includegraphics[width=\columnwidth]{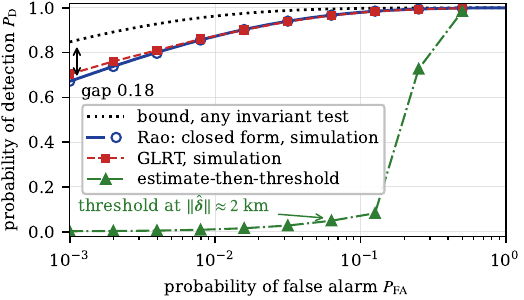}
\caption{Receiver operating characteristics for a 200~m falsification at a bearing of $60^{\circ}$, $K=6$, $M=120$. The bound is~\eqref{eq:bound} with $\lambda_{\mathrm{c}}=16.9$, and the Rao test has $\lambda_{\mathrm{R}}=16.2$. The arrow marks the gap of 0.18 to the bound at $P_{\mathrm{FA}}=10^{-3}$.}
\label{fig:roc}
\end{figure}

On the other hand, the estimate-then-threshold detector attains only $P_{\mathrm{D}}=0.01$ at $P_{\mathrm{FA}}=10^{-2}$, although it uses the same data and the same profiled model. This is because the profiled cost is nearly flat along some directions. As a result, $\hat{\bdelta}$ wanders far from zero under $\mathcal{H}_0$, with a 99th percentile of 2.05~km at the edge of $\mathcal{D}$. The likelihood-based tests avoid this problem, since they measure how much the fit improves rather than how far the estimate moves. Therefore, the Rao test delivers the performance of the GLRT without a search and without calibration by simulation.

\subsection{Radial Saturation}

In Fig.~\ref{fig:sat}, we show $\lambda_{\mathrm{c}}$ as a function of the falsification magnitude on the worst bearing for the three knowledge cases of Section~\ref{sec:model}. By Corollary~\ref{cor:worst}, this bearing is the north--south axis through the aircraft. In the unanchored case, a falsification pointing outward never reaches $\lambda^{\star}$. Instead, $\lambda_{\mathrm{c}}$ increases monotonically to $\lambda_{\infty}=15.9$, so no invariant verifier can detect this falsification at any magnitude. A falsification pointing inward, toward the asset, is easy to detect at a few hundred meters, where $\lambda_{\mathrm{c}}$ peaks at 139. However, it falls below $\lambda^{\star}$ beyond about 7~km and converges to the same limit. As explained in Section~\ref{sec:limits}-B, far away the free intercept and exponent absorb almost all of the signature, so a larger falsification is not necessarily easier to detect.

\begin{figure}[t]
\centering
\includegraphics[width=\columnwidth]{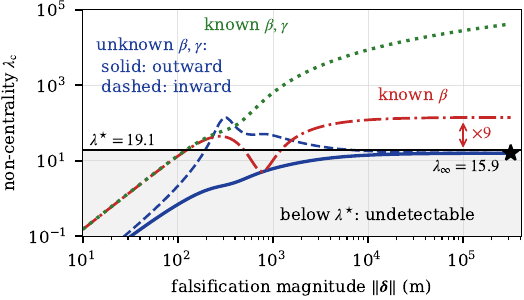}
\caption{Clairvoyant non-centrality against falsification magnitude on the worst bearing, $K=6$, $N=20$, for unknown $\beta,\gamma$ (outward and inward), known $\beta$, and known $\beta,\gamma$. The star marks the closed-form limit~\eqref{eq:laminf}, and the horizontal line is the requirement $\lambda^{\star}$ for $(P_{\mathrm{FA}},P_{\mathrm{D}})=(10^{-3},0.9)$; falsifications in the shaded area cannot be detected. The arrow marks the ninefold gain in the limit when $\beta$ is known.}
\label{fig:sat}
\end{figure}

The curves agree with~\eqref{eq:laminf} and~\eqref{eq:worst} within 0.4\%. For the Rao test, $\lambda_{\mathrm{R}}$ saturates at a lower value of 8.4, as Proposition~\ref{prop:bound} requires. Anchoring $\beta$ raises the limit to 141 on every bearing (Corollary~\ref{cor:anch}), and knowing both parameters removes the saturation altogether. Therefore, in this geometry, a transmit power field in the Remote ID message would increase the worst-case evidence by a factor of nine.

\subsection{Where Verification Fails}

We next evaluate $N_{\min}$ from~\eqref{eq:nmin} over the protected area for a hovering aircraft. Near the center of the ring, a hovering aircraft cannot be verified with any practical window length, because all true ranges are nearly equal there and $\tilde{\boldsymbol{\ell}}\approx\bv{0}$ in~\eqref{eq:worst}. In addition, weak regions appear along the radial lines through each receiver and just outside the ring. Overall, a 20~s window bounds the stealth region for 60\% of the area inside the ring. Moreover, motion helps by adding geometric diversity. Over 400 random positions inside the ring with $K=6$, the stealth region is bounded within 20 packets at 69\% of positions for a transiting aircraft and at 55\% for a hovering one. With $K=3$, a hovering aircraft is never verifiable, since the worst bearing removes the only residual dimension left by the two link parameters.

\subsection{Receiver Placement}

In Fig.~\ref{fig:place}, we show the median $N_{\min}$ over 36 azimuths as a function of the ring radius $R$ for a hovering aircraft at ground ranges $\rho=200$, 400 and 600~m, with $K=8$. Each curve has a sharp minimum close to $R=\rho$. Over $\rho\in[150,600]$~m and $K\in\{6,8,12\}$, the optimal radius lies between $1.0\rho$ and $1.17\rho$, while a ring of half or twice the optimal radius requires 4 to 11 times more packets. As explained in Section~\ref{sec:limits}-C, the spread of the true log-ranges, which carries all the evidence in~\eqref{eq:laminf}, is largest when the target stands among the receivers. Beyond $R=1$~km, distant receivers fall below the sensitivity. Therefore, unlike range-based multilateration, a wider baseline does not help; receivers should be placed at the expected standoff of the threat, not at the asset perimeter.

\begin{figure}[t]
\centering
\includegraphics[width=\columnwidth]{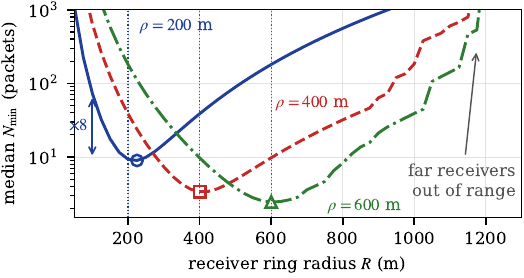}
\caption{Median $N_{\min}$ over target azimuth against ring radius for a hovering aircraft at three ground ranges $\rho$, $K=8$. Open markers show the optima, and vertical dotted lines mark $R=\rho$. The arrow marks the cost of halving the radius for $\rho=200$~m.}
\label{fig:place}
\end{figure}

\section{Conclusion}\label{sec:conc}

In this paper, we studied RSS-based verification of Remote ID position claims with unknown transmit power and path-loss exponent. We derived a Rao score test that is exactly constant false-alarm rate (CFAR) for any number of packets and considerably outperforms an estimate-then-threshold detector at low false-alarm rates. We then proved that the detection probability of every invariant verifier saturates as the falsification grows along a fixed bearing. With 20 packets in our geometry, all sufficiently large outward falsifications are therefore undetectable. The limit also yields the worst bearing, the necessary observation length, and a rule that matches the ring radius to the ground range of the threat. These results assume independent Gaussian shadowing with known variance, calibrated receiver gains, a constant falsification offset and a fixed omnidirectional emission. Future work will address correlated shadowing, which reduces the effective number of packets, power-controlling attackers, and validation on the Aerial Experimentation and Research Platform for Advanced Wireless (AERPAW)~\cite{aerpaw}.

\section*{Acknowledgment}
The authors used Anthropic's Claude Code and Claude Cowork to assist with simulation code development, data analysis, and drafting and editing of the manuscript text. All AI-assisted content was independently verified by the authors; content that could not be verified was revised or discarded. The authors take full responsibility for the work.

\end{document}